\documentclass[12pt]{article}
\usepackage{amsmath}
\usepackage{enumerate}
\usepackage[round]{natbib} %
\usepackage{url} 

\usepackage{booktabs}
\usepackage{amssymb,amsthm}


\newcommand{\var}{\mathrm{var}}
\newcommand{\cor}{\mathrm{cor}}
\newcommand{\eff}{\mathrm{effar}}
\newcommand{\leff}{\mathrm{leffar}}

\newcommand{\efford}{\mathrm{eff}}

\newcommand{\natu}{\mathbb{N}}

\newcommand{\kmax}{k_{\max}}

\newcommand{\logonep}{\mathrm{log1p}}
\newcommand{\rd}{\,\mathrm{d}}
\renewcommand{\ge}{\geqslant}
\renewcommand{\le}{\leqslant}

\newtheorem{theorem}{Theorem}
\newtheorem{lemma}{Lemma}
\newtheorem{proposition}{Proposition}


\newcommand{\blind}{0}

\addtolength{\oddsidemargin}{-.5in}%
\addtolength{\evensidemargin}{-.5in}%
\addtolength{\textwidth}{1in}%
\addtolength{\textheight}{1.3in}%
\addtolength{\topmargin}{-.8in}%
\newcommand{\orho}{\bar \rho}
\newcommand{\urho}{\underline\rho}

\begin{document}

\def\spacingset#1{\renewcommand{\baselinestretch}%
{#1}\small\normalsize} \spacingset{1}


\if0\blind
{
  \title{\bf Statistically efficient thinning of a Markov chain sampler}
  \date{This version: April 2017}
  \author{Art B. Owen\thanks{
    This work was supported by NSF grants
\textit{DMS-1407397 and DMS-1521145.}}
\hspace{.2cm}\\
Department of Statistics\\ 
Stanford University}
  \maketitle
} \fi
\bigskip
\begin{abstract}
It is common to subsample Markov chain output to  
reduce the storage burden.  
\cite{geye:1992} shows that discarding $k-1$ out of every $k$ observations  
will not improve statistical efficiency, as quantified through
variance in a given computational budget.
That observation is often taken to mean that thinning
MCMC output cannot improve statistical efficiency.
Here we suppose that it costs one unit of time to advance
a Markov chain and then $\theta>0$ units of time to compute
a sampled quantity of interest. 
For a thinned process, that cost $\theta$ is incurred less
often, so it can be advanced through more stages.
Here we provide examples to show that thinning
will improve statistical efficiency if $\theta$ is large
and the sample autocorrelations decay slowly enough.
If the lag $\ell\ge1$ autocorrelations of a scalar measurement
satisfy $\rho_\ell\ge\rho_{\ell+1}\ge0$, then there is always a $\theta<\infty$
at which thinning becomes more efficient for averages of that scalar.
Many sample autocorrelation functions resemble
first order AR(1) processes with $\rho_\ell =\rho^{|\ell|}$
for some $-1<\rho<1$.
For an AR(1) process it is possible
to compute the most efficient subsampling frequency $k$.  
The optimal $k$ grows rapidly as $\rho$ increases towards $1$.  
The resulting efficiency gain depends primarily on $\theta$, not $\rho$.  
Taking $k=1$ (no thinning) is optimal when $\rho\le0$. For  
$\rho>0$ it is optimal if and only if  
$\theta \le (1-\rho)^2/(2\rho)$.  This efficiency gain never exceeds $1+\theta$.  
This paper also gives efficiency bounds
for autocorrelations bounded between those of two AR(1) processes.
\end{abstract}

\noindent%
{\it Keywords:}  Autoregression, Markov chain Monte Carlo, Subsampling 
\vfill

\section{Introduction}
It is common to thin a Markov chain sample,
taking every $k$'th observation instead of
all of them.  Such subsampling is done to produce
values that are more nearly independent.  It also
saves storage costs.  It is well known that the average
over a thinned sample set has greater variance than
the plain average over all of the computed values  
\citep{geye:1992}.

Most authors recommend against thinning, except where it
is needed to reduce storage.
\cite{mace:berl:1994} go so far as to provide a 
`justification for the ban against subsampling'.
\cite{link:eato:2011} 
write that ``Thinning is often unnecessary and always inefficient''.
In discussing thinning of the Gibbs sampler,
\cite{gamerman2006markov} 
say: ``There is no gain in efficiency, however, by this approach and
estimation is shown below to be always less precise than retaining
all chain values.''

One exception is \cite{geye:1991} who acknowledges that
thinning can in fact increase statistical efficiency.
Thinning reduces the average cost of iterations which then makes it possible
to run a thinned Markov chain longer than an unthinned one
at the same computational cost.
He gives some qualitative remarks about this effect, but ultimately concludes
that it is usually a negligible benefit because the autocorrelations in
the Markov chain decay exponentially fast. \cite{link:eato:2011}
also acknowledge this possibility in their discussion
as does \citet[page 106]{neal:1993}.

This paper revisits the thinning problem and shows that the usual
advice against thinning can be misleading, by quantifying the argument
of \cite{geye:1991} described above.  The key variables are
the cost of computing the quantity of interest (after advancing the Markov
chain) and the speed at which correlations in the quantity of interest decay.
When the cost is expensive and the decay is slow, then thinning can
improve efficiency by a large factor.

We suppose that it costs one unit to advance the
Markov chain and $\theta>0$ units each time the quantity
of interest is computed.  If lag $\ell$ autocorrelations 
satisfy $\rho_1\ge\rho_2\ge\cdots>0$, then there is always
a $\theta$ for which thinning by a factor of $k$ will
improve efficiency.

For a first order autoregressive autocorrelation structure in the quantity of interest, 
very precise results are possible.  
Given the update costs  and the autocorrelation parameter we can  
compute the optimal thinning factor as well as the efficiency  
improvement with that factor.  
The autoregressive assumption is very convenient because it reduces the  
dependence problem to just one scalar parameter.  
Also, real-world  autocorrelations commonly resemble those  of an AR(1) model.  
In the social sciences,
the book by \cite{jack:2009} shows many sample autocorrelation  
functions that resemble AR(1).   
The physicists~\cite{newm:bark:1999} writing about the Ising model
state that ``the autocorrelation is expected to fall off exponentially
at long times'' (p 60).
\cite{geye:1991} notes an exponential upper bound for autocorrelations
when processes are $\rho$-mixing.

Sometimes  thinning is built in to standard
simulation practice. For instance an Ising model may be simulated
as a sequence of `passes' with each pixel being examined on average
once per pass.  The state of the Markov chain might only
be inspected once per pass. That represents
a substantial, though not necessarily optimal amount of thinning.
It might really be better to sample several times per pass or
just once every $k$ passes.

An outline of this paper is as follows.
Section~\ref{sec:asyeff} defines asymptotic efficiency
of thinning to every $k$'th observation when the samples
have unit cost to generate, the function of interest costs $\theta>0$
each time we compute it.
If the autocorrelations $\rho_\ell$ for $\ell\ge1$ are nonnegative and nonincreasing
and $\rho_k>0$ then there is always some finite $\theta>0$
for which thinning by a factor of $k$ is more efficient
than not thinning.
Much sharper results can be obtained
when the autocorrelations take the form $\rho_\ell=\rho^\ell$ at lag~$\ell$. 
In many cases the optimal thinning factor $k$ is greater than one.

Section~\ref{sec:sometheory} presents some inequalities
among the efficiency levels at different subsampling frequencies
in the AR(1) case.
Thinning never helps when $\rho\le0$.
For $\rho>0$, if any thinning level is to help, then taking every second
sample must also help, and as a result we can get sharp expressions for
the autocorrelation level at which thinning increases efficiency. In the
limit $\rho\to1$ very large thinning factors become optimal but
frequently much smaller factors are nearly as good. The efficiency gain
does not exceed $1+\theta$ for any $\rho$ and $k$.
Section~\ref{sec:nearauto} considers autocorrelations that are
bounded between two autoregressive forms $\urho^\ell\le\rho_\ell\le\orho^\ell$.
The range of optimal thinning factors widens, but it is often possible
to find meaningful efficiency improvements from thinning.
Section~\ref{sec:optimization} describes how to compute the optimal
thinning factor $k$ given the parameters $\theta$ and an autoregression parameter $\rho$.
Section~\ref{sec:conclusions} has conclusions and discusses
consequences of rejected proposals having essentially zero cost
while accepted ones have a meaningfully large cost.
An appendix has R code to compute the optimal $k$.

We close with some practical remarks. When thinning benefits,
it does not appear to be critical to find the optimal factor $k$.  Instead
there are many near optimal thinning factors.  If the autocorrelations decay
slowly and the cost $\theta$ is large then a suggestion of Hans
Anderson is to thin in such a way that about half of the cost is spent
advancing the Markov chain and about half is spent computing the
quantity of interest.  That should be nearly as efficient as using the optimal $k$.

\section{Asymptotic efficiency}\label{sec:asyeff}
To fix ideas, suppose that we generate a Markov chain
$x_t$ for $t\ge1$. We have a starting value $x_0$ and 
then it costs one unit
of computation to transition from $x_{t-1}$ to $x_t$. 
The state space for $x_t$ can be completely general
in the analysis below.

Interest centers on the
expected value of $y_t = f(x_t)$ for some real-valued function $f$.
There is ordinarily more than one such function, but here we
focus on a single one.
The cost to compute $f$ is $\theta$.  Often $\theta\ll1$
but it is also possible that $\theta$ is comparable to $1$
or even larger.  
For instance it may be inexpensive to perform one
update on a system of particles, but very expensive to find the
new minimum distance among all those particles or some similar
quantity of interest. Or, it may be very inexpensive to flip one
or more edges in a simulated network but expensive to compute
a connectivity property of the resulting network. Finally, when
computation must pause to store $f(x_t)$,
then the cost of pausing is included in $\theta$.

The efficiency of thinning derived here depends
on the cost of computing $y_t$ from $x_t$, the
cost of transition from $x_t$ to $x_{t+1}$, and the autocovariances of
the series $y_t$. We assume that $y_t$ is stationary:  any necessary
warmup has taken place.

The variance of $\sqrt{n}\hat\mu\equiv(1/\sqrt{n})\sum_{i=1}^nf(x_i)$ is asymptotically
$\sigma^2(1+2\sum_{\ell=1}^\infty\rho_\ell)$
where $\rho_\ell =\cor(y_i,y_{i+\ell})$ and $\sigma^2 = \var(y_i)$.
We assume that $0<\sigma^2<\infty$.
Now suppose that we thin the chain as follows. We compute $y_i=f(x_i)$
only for every $k$'th observation.  
The number of function values we get will depend on $k$.
If we take $n_k$ of them then we estimate $\mu$ by
$$
\hat\mu_k = \frac1{n_k}\sum_{i=1}^{n_k}f(x_{ik}).
$$
To compute $\hat\mu_k$ we must advance the chain $kn_k$ times
and evaluate $f$ at each of $n_k$ points for a total cost of $n_k(k+\theta)$. 
When our computational budget is a cost of $B>0$, then
we will use the largest $n_k$ with $n_k(k+\theta)\le B$. That is $n_k = \lfloor B/(k+\theta)\rfloor$.

The relative efficiency of thinning by a factor $k$ compared to not thinning
at all is
$$
\efford_B(k) =
\frac{
 (\sigma^2/n_1)(1+2\sum_{\ell=1}^\infty\rho_\ell) 
}{
 (\sigma^2/n_k)(1+2\sum_{\ell=1}^\infty\rho_{k\ell}) 
}
=
\frac{\lfloor B/(k+\theta)\rfloor}{\lfloor B/(1+\theta)\rfloor}
\frac{
1+2\sum_{\ell=1}^\infty\rho_\ell 
}{
1+2\sum_{\ell=1}^\infty\rho_{k\ell}
}. 
$$
The dependence on $B$ is minor and is a nuisance.  We work instead with
\begin{align}\label{eq:efford}
\efford(k) =
\frac{1+\theta}{k+\theta}
\frac{
1+2\sum_{\ell=1}^\infty\rho_\ell 
}{
1+2\sum_{\ell=1}^\infty\rho_{k\ell}
},
\end{align}
which is also the limit of $\efford_B(k)$ as $B\to\infty$.

\subsection{Generic autocorrelations}

The efficiency of thinning depends on the autocorrelations $\rho_\ell$
only through certain sums of them.
We can use this to get inequalities on autocorrelations
that are equivalent to statements on the efficiency $\efford(k)$.
Then under a monotonocity constraint on autocorrelations
we can get a condition that ensures that thinning will help.

\begin{lemma}\label{lem:rmklow}
Let $R=\sum_{\ell=1}^\infty\rho_\ell$, and for a thinning factor  
$k\ge1$, define $R_k=\sum_{\ell=1}^\infty\rho_{k\ell}$
and $R_{-k} = R-R_k$.  
Then $\efford(k)<1$ if and only if
\begin{align}\label{eq:rmklow}
R_{-k} < \frac{k-1}{\theta+1}\bigl(R_k+1/2\bigr).  
\end{align}
\end{lemma}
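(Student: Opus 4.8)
The plan is to reduce the statement $\efford(k)<1$ to a linear inequality in $R_{-k}$ by clearing denominators in the closed form~\eqref{eq:efford}. First I would note that the two quantities appearing in the denominator are strictly positive: $k+\theta>0$ because $k\ge1$ and $\theta>0$, and $1+2R_k=1+2\sum_{\ell\ge1}\rho_{k\ell}$ is a positive multiple of the asymptotic variance of the thinned average (likewise $1+2R>0$ for the unthinned average). Hence multiplying $\efford(k)<1$ through by the positive number $(k+\theta)(1+2R_k)$ preserves the direction of the strict inequality and yields the equivalent statement
\[
(1+\theta)\bigl(1+2R\bigr) < (k+\theta)\bigl(1+2R_k\bigr).
\]

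Next I would use $R=R_k+R_{-k}$, which is simply the definition of $R_{-k}$, to write $1+2R=(1+2R_k)+2R_{-k}$, substitute this into the left-hand side, and cancel the common term $(1+\theta)(1+2R_k)$ from both sides. What remains is $2(1+\theta)R_{-k}<(k-1)(1+2R_k)$, and dividing by $2(1+\theta)>0$ gives $R_{-k}<\frac{k-1}{\theta+1}\bigl(R_k+1/2\bigr)$, which is exactly~\eqref{eq:rmklow}. Since each manipulation is reversible, this establishes the ``if and only if''.

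There is no genuine obstacle here; the content is a two-line rearrangement. The only points meriting care are the sign checks that license multiplying and dividing by $k+\theta$, $1+2R$, and $1+2R_k$, so that the strict inequality is preserved throughout—these are precisely the positivity conditions already assumed for the efficiency ratio to be meaningful—together with the bookkeeping identity $R=R_k+R_{-k}$. I would also remark briefly on the boundary case $k=1$: then both sides of~\eqref{eq:rmklow} are zero (indeed $R_{-1}=0$), the strict inequality fails, and $\efford(1)=1$, so the characterization is consistent there.
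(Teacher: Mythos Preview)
Your argument is correct and follows essentially the same route as the paper: rewrite $\efford(k)$ using $1+2R=(1+2R_k)+2R_{-k}$, clear the positive denominators, and rearrange to obtain~\eqref{eq:rmklow}. You are, if anything, more careful than the paper in spelling out the positivity checks that justify preserving the strict inequality and in noting the consistency at $k=1$.
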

\begin{proof}
We rewrite~\eqref{eq:efford} as
$$
\efford(k)=
\frac{1+\theta}{k+\theta}\frac{1+2R_k+2R_{-k}}{1+2R_k}.  
$$ 
Then  $\efford(k)<1$ if  and only if
$1+2R_k+2R_{-k}<(1+2R_k)(k+\theta)/(1+\theta),$ 
which can be rearranged into~\eqref{eq:rmklow}.
\end{proof}

Only one out of every $k$ consecutive autocorrelations 
contributes to $R_k$ while the other $k-1$ of them contribute to $R_{-k}$. 
If we let $\bar R_{-k}=R_{-k}/(k-1)$, then equation~\eqref{eq:rmklow}
becomes $\bar R_{-k} < (R_k+1/2)/(\theta+1)$. 
For a Markov chain with slowly converging autocorrelations
we will have $R_k\gg 1/2$. Then
for thinning to be inefficient, the autocorrelations
contributing to $R_k$ have to be enough larger than the others
to overcome the factor $\theta+1$.  When $\theta$ is large
we would then need every $k$'th autocorrelation to be surprisingly
large compared to the nearby ones, in order to make thinning inefficient.

Now suppose that the autocorrelations satisfy
\begin{align}\label{eq:monorho}
\rho_1\ge\rho_2\ge\cdots\ge0. 
\end{align}
This quite mild sufficient condition rules out a setting where every
$k$'th autocorrelation is unusually large compared to
its $k-1$ predecessors.

\begin{theorem}\label{thm:monorho}
If~\eqref{eq:monorho} holds then $\efford(k)<1$ can only hold
for $\theta < 1/(2R_k).$
\end{theorem}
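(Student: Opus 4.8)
The plan is to leverage Lemma~\ref{lem:rmklow}, which already converts the inequality $\efford(k)<1$ into the arithmetic condition~\eqref{eq:rmklow}, namely $R_{-k} < \frac{k-1}{\theta+1}\bigl(R_k+1/2\bigr)$. So the task reduces to showing that, under the monotonicity hypothesis~\eqref{eq:monorho}, this inequality forces $\theta < 1/(2R_k)$. The natural strategy is to find a lower bound on $R_{-k}$ in terms of $R_k$ using~\eqref{eq:monorho}, substitute it into~\eqref{eq:rmklow}, and solve the resulting inequality for $\theta$.

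First I would set up the term-by-term comparison suggested by the paragraph preceding the theorem: $R_k = \sum_{\ell\ge1}\rho_{k\ell}$ collects one autocorrelation out of every block of $k$ consecutive lags, and $R_{-k}=R-R_k$ collects the other $k-1$. Because~\eqref{eq:monorho} says the $\rho_\ell$ are nonincreasing and nonnegative, within the $j$-th block $\{(j-1)k+1,\dots,jk\}$ the largest index is $jk$, so each of the $k-1$ non-sampled lags in that block has autocorrelation $\ge \rho_{jk}$. Summing over blocks gives $R_{-k} \ge (k-1)\sum_{j\ge1}\rho_{jk} = (k-1)R_k$. (One should double-check the edge behavior of the very first block and the fact that the sums are over disjoint blocks exhausting all of $\{1,2,\dots\}$, but this is routine bookkeeping.)

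Now I would combine $R_{-k}\ge(k-1)R_k$ with the necessary condition~\eqref{eq:rmklow}: if $\efford(k)<1$ then
\[
(k-1)R_k \le R_{-k} < \frac{k-1}{\theta+1}\bigl(R_k+\tfrac12\bigr).
\]
Cancelling the factor $k-1>0$ (the statement is vacuous at $k=1$ since $\efford(1)=1$) gives $(\theta+1)R_k < R_k + 1/2$, i.e.\ $\theta R_k < 1/2$, which is exactly $\theta < 1/(2R_k)$ (using $R_k>0$, which follows from $\rho_1\ge0$ together with the chain having nonzero autocorrelation; if $R_k=0$ the bound $\theta<\infty$ is trivial). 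That completes the argument.

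The only genuine subtlety is the direction of the term-by-term bound: one must be careful to bound each non-sampled lag \emph{below} by the sampled lag at the \emph{end} of its block (index $jk$), not the start, so that monotonicity works in our favor; pairing with the wrong endpoint would give a bound in the useless direction. Everything else — cancelling $k-1$, rearranging for $\theta$ — is elementary. So I expect essentially no obstacle beyond stating the block decomposition carefully.
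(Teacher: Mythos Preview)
Your proposal is correct and follows essentially the same route as the paper: invoke Lemma~\ref{lem:rmklow} to get $R_{-k}<\frac{k-1}{\theta+1}(R_k+1/2)$, use monotonicity~\eqref{eq:monorho} to obtain $(k-1)R_k\le R_{-k}$, combine, cancel $k-1$, and rearrange. The only wrinkle is your parenthetical justification of $R_k>0$ (it hinges on $\rho_k>0$, not $\rho_1>0$), but since you also dispose of the $R_k=0$ case as yielding a vacuous bound, this does not affect the argument.
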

\begin{proof}
From Lemma~\ref{lem:rmklow}, $\efford(k)<1$ implies that
$R_{-k} <(R_k+1/2)(k-1)/(\theta+1)$.
If~\eqref{eq:monorho} holds then $(k-1)R_k\le R_{-k}$.
Therefore
$(k-1)R_{k} < (R_k+1/2)(k-1)/(\theta+1)$ which can 
be rearranged to complete the proof.
\end{proof}

If $\rho_\ell$ are large and slowly decreasing, then $R_k$
will be quite large and $1/(2R_k)$ will be very small.
Then even for mild costs $\theta$, Theorem~\ref{thm:monorho}
ensures that some form of thinning will improve asymptotic
efficiency.  
The converse does not hold:  thinning might still
help, even if $\theta <1/(2R_k)$.

The condition~\eqref{eq:monorho} includes the
case with $\rho_\ell=0$ for all $\ell>1$. This is a case
where thinning cannot help. We also
get $R_k=0$ here, so Theorem~\ref{thm:monorho} then
places no constraint on~$\theta$, consistent with the
fact that thinning cannot then help.
If~\eqref{eq:monorho} holds, then all we need is 
$\rho_k>0$ to get $R_k>0$. Then there is
a $\theta<\infty$ for which $\efford(k)>1$ holds.

\subsection{AR(1) autocorrelations}

Here we consider a
first order autoregressive model, $\rho_k = \rho^{k}$ for $\rho\in(-1,1)$ and $k\in\natu$. In this setting it is possible to find the most efficient values
of $k$ and to measure the efficiency gain from them.
It is reasonable to expect qualitatively similar results from autocorrelations that
have approximately the AR(1) form.
Some steps in that direction are in Section~\ref{sec:nearauto}.

Under an AR(1) model
\begin{align}\label{eq:eff}
\efford(k) = \eff(k)
= \eff(k;\theta,\rho)\equiv
\frac {1+\theta}{k+\theta}
\frac{1+\rho}{1-\rho}
\frac{1-\rho^k}{1+\rho^k}. 
\end{align}
We use $\eff(k)$ to denote an efficiency computed under the autoregressive assumption
and $\efford(k)$ to denote a more general efficiency.
The efficiency in~\eqref{eq:efford} is a continuous function of the underlying $\rho_\ell$ inside of it,
so small departures from the autoregressive assumption will make small changes in
efficiency.  When the peak of $\efford(k)$ is flat then small changes in $\rho_\ell$ may
bring large changes in $\arg\max_k\efford(k)$.

Table~\ref{tab:bestk} shows $\arg\max_k\eff(k;\rho,\theta)$ for
a range of correlations $\rho$ and costs $\theta$. 
This $k$ is computed via a search 
described in Section~\ref{sec:optimization}. As one would
expect, the optimal thinning factor increases with both $\theta$ and $\rho$.

Perhaps surprisingly, the optimal thinning factor can be large even for $\theta \ll1$,
when the chain mixes slowly.  For instance with $\theta = 0.01$ 
and $\rho = 0.9999$, the optimal thinning takes every $182$'nd value.
But Table~\ref{tab:effbestk} shows that in such cases only 
a small relative efficiency gain occurs.  For $\theta = 0.01$ and
$\rho=0.9999$ the improvement is just under $1$\% and this gain may not be worth the trouble
of using thinning.

When the cost $\theta$ is comparable to one, then thinning can bring a meaningful
efficiency improvement for slow mixing chains.  The efficiency gain approaches $\theta+1$
in the limit as $\rho\to1$.
See equation~\eqref{eq:limeffk} in Section~\ref{sec:sometheory}.

\begin{table}[p]\centering 
\begin{tabular}{lrrrrrrrr}
\toprule 
$\theta\ \backslash\ \rho$&      0.1& 0.5& 0.9 &0.99 &0.999 &0.9999 &0.99999 &0.999999\\
\midrule 
0.001 &  1 &  1 &  1 &   4 &   18 &    84 &    391 &    1817\\
0.01 &   1 &  1 &  2 &   8  &  39 &   182 &    843 &    3915\\
0.1 &    1 &  1 &  4 &  18  &  84 &   391 &   1817 &    8434\\
1 &      1 &  2 &  8  & 39 &  182 &   843 &   3915 &   18171\\
10 &     2 &  4 & 17 &  83 &  390 &  1816 &   8433 &   39148\\
100 &    3 &  7 & 32 & 172 &  833 &  3905 &  18161 &   84333\\
1000 &   4 & 10 & 51 & 327  &1729 &  8337 &  39049 &  181612\\
\bottomrule 
\end{tabular}
\caption{\label{tab:bestk}
Optimal thinning factor $k$ as a function of the relative cost 
$\theta$ of function evaluation and the autoregressive parameter $\rho$. 
}
\end{table}

\begin{table}\centering 
\begin{tabular}{lrrrrrrrr}
\toprule 
$\theta\ \backslash\ \rho$&      0.1& 0.5& 0.9 &0.99 &0.999 &0.9999 &0.99999 &0.999999\\
\midrule 
0.001 &1.00 &1.00 & 1.00 &  1.00 &  1.00 &  1.00 &   1.00 &    1.00 \\
0.01 & 1.00 &1.00 & 1.00 &  1.01 &  1.01 &  1.01 &   1.01 &    1.01  \\
0.1 &  1.00 &1.00 & 1.06 &  1.09   &1.10 &  1.10 &   1.10 &    1.10  \\
1 &    1.00 &1.20 & 1.68  & 1.93   &1.98  & 2.00 &   2.00 &    2.00  \\
10 &   1.10 &2.08 & 5.53  & 9.29  &10.59 & 10.91 &  10.98&    11.00  \\
100 &  1.20 &2.79 &13.57  &51.61 &  85.29  &97.25&  100.17&   100.82  \\
1000 & 1.22 &2.97 &17.93 &139.29 &512.38 &845.38&  963.79 &  992.79  \\
\bottomrule 
\end{tabular}
\caption{\label{tab:effbestk}
Asymptotic efficiency of the optimal thinning factor $k$ 
from Table~\ref{tab:bestk} as a function of 
$\theta$ and $\rho$.  Values rounded to two places.
}
\end{table}

A more efficient thinning rule allows the user to wait
less time for an answer, or to attain a more accurate answer
in the same amount of time.   It may be a slight nuisance
to incorporate thinning and when storage is not costly,
we might even prefer to explore a larger set of sampled $y$ values.
Table~\ref{tab:okk} shows the least amount of thinning that
we can do to get at least $95$\% efficiency relative to the
most efficient value of $k$. That is, we find the smallest $k$
with $\eff(k;\rho,\theta)\ge0.95\min_{\,\ell\ge1}\eff(\ell;\rho,\theta)$. 
When $95$\% efficiency is adequate and $\theta$ is small
then there is no need to thin.
Theorem~\ref{thm:effbound} below shows that in the AR(1) model,
there is no
need to thin at any $\rho$, if efficiency $1/(1+\theta)$ is acceptable.

\begin{table}\centering 
\begin{tabular}{lrrrrrrrr}
\toprule 
$\theta\ \backslash\ \rho$&      0.1& 0.5& 0.9 &0.99 &0.999 &0.9999 &0.99999 &0.999999\\
\midrule 
0.001 &  1 &  1 &  1 &   1 &    1 &     1 &      1 &       1 \\
0.01 &   1 &  1 &  1 &   1 &    1 &     1 &      1 &       1 \\
0.1 &    1 &  1 &  2 &   2 &    2 &     2 &      2 &       2 \\
1 &      1 &  2 &  5 &  11 &   17 &    19 &     19 &      19 \\
10 &     2 &  4  &12 &  45 &  109  &  164  &   184 &     189 \\
100 &    2 &  5  &22 & 118 &  442 &  1085 &   1632 &    1835 \\
1000 &   2 &  6  &31 & 228 & 1182 &  4415 &  10846  &  16311 \\
\bottomrule 
\end{tabular}
\caption{\label{tab:okk}
Smallest $k$ to give at least $95$\% of the efficiency 
of the most efficient $k$, as a function of 
$\theta$ and the autoregression parameter $\rho$. 
}
\end{table}

\section{Some inequalities}\label{sec:sometheory}

Here we compare efficiencies for different choices
of the thinning factor $k$, under the autoregressive assumption
$\rho_\ell = \rho^\ell$.
We find that thinning never helps when $\rho<0$.
In the limit as $\rho\to1$, the optimal $k$ diverges to
infinity but we can attain nearly full efficiency by taking $k$
to be a modest multiple of $\theta$.
When $\rho>0$, the critical value of $\theta$, meaning  
one large enough to make  
$\eff(k;\rho,\theta)>\eff(1;\rho,\theta)$, 
is an increasing function of $k\ge2$.  
As a result we can determine when $k=1$ is optimal.
The following basic lemma underpins several of the results.

\begin{lemma}\label{lem:basic}
Let $r>s\ge1$ be integers, $\theta\ge0$ and $-1<\rho<1$.
Then $\eff(r;\rho,\theta)>\eff(s;\rho,\theta)$ if and only if
\begin{align}\label{eq:basic}
2(\theta+s)(\rho^s-\rho^r)>(r-s)(1-\rho^s)(1+\rho^r).
\end{align}
\end{lemma}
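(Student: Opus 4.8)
The plan is to start from the definition~\eqref{eq:eff} and simply write out the inequality $\eff(r;\rho,\theta)>\eff(s;\rho,\theta)$ as a ratio of products. Using~\eqref{eq:eff}, the common factors $(1+\theta)$ and $(1+\rho)/(1-\rho)$ cancel immediately, so the claim reduces to
\[
\frac{1}{r+\theta}\,\frac{1-\rho^r}{1+\rho^r}
>\frac{1}{s+\theta}\,\frac{1-\rho^s}{1+\rho^s}.
\]
First I would clear denominators. Since $r+\theta>0$, $s+\theta>0$, $1+\rho^r>0$ and $1+\rho^s>0$ for all $-1<\rho<1$ and $\theta\ge 0$ (here one uses $|\rho|<1$ so $|\rho^r|,|\rho^s|<1$), multiplying through by the product of these four positive quantities preserves the direction of the inequality. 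This turns the claim into
\[
(s+\theta)(1-\rho^r)(1+\rho^s)>(r+\theta)(1-\rho^s)(1+\rho^r).
\]

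Next I would expand both sides. Writing $a=\rho^s$ and $b=\rho^r$ for brevity, the left side is $(s+\theta)(1+a-b-ab)$ and the right side is $(r+\theta)(1-a+b-ab)$. Subtracting, the terms in $1-ab$ combine to give $(s-r)(1-ab)$, i.e. $-(r-s)(1-\rho^{r+s})$, while the terms in $a-b=\rho^s-\rho^r$ combine to give $\bigl[(s+\theta)+(r+\theta)\bigr](\rho^s-\rho^r)$. Hmm — I should be careful with signs here: on the left $a-b$ appears with coefficient $+(s+\theta)$, and on the right $-a+b=-(a-b)$ appears with coefficient $+(r+\theta)$, so moving the right side over contributes $+(r+\theta)(a-b)$ as well. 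So the difference (left minus right) equals $(2\theta+r+s)(\rho^s-\rho^r)-(r-s)(1-\rho^{r+s})$. The target inequality~\eqref{eq:basic} instead reads $2(\theta+s)(\rho^s-\rho^r)>(r-s)(1-\rho^s)(1+\rho^r)$, so I will need one more algebraic identity: $(1-\rho^s)(1+\rho^r)=1-\rho^s+\rho^r-\rho^{r+s}=(1-\rho^{r+s})-(\rho^s-\rho^r)$. Substituting this into~\eqref{eq:basic} and moving the $-(\rho^s-\rho^r)$ term to the left shows~\eqref{eq:basic} is equivalent to $[2(\theta+s)+(r-s)](\rho^s-\rho^r)>(r-s)(1-\rho^{r+s})$, and $2(\theta+s)+(r-s)=2\theta+r+s$, matching exactly the difference computed above. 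So the two forms coincide and the equivalence follows.

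The only real care needed is sign bookkeeping in the expansion and the verification that every quantity I divide or multiply by is strictly positive; there is no genuine obstacle, since $|\rho|<1$ guarantees $1\pm\rho^m>0$ and $m+\theta>0$ for $m\ge 1$, $\theta\ge 0$. I would present the proof as: (i) cancel common factors in~\eqref{eq:eff}; (ii) cross-multiply by the positive denominators; (iii) expand and collect terms, using the identity $(1-\rho^s)(1+\rho^r)=(1-\rho^{r+s})-(\rho^s-\rho^r)$ to match the stated form; and conclude. It is worth noting for later use that when $\rho>0$ and $r>s$ we have $\rho^s-\rho^r>0$ and $1-\rho^{r+s}>0$, so both sides of~\eqref{eq:basic} are positive and the inequality has content; when $\rho\le 0$ a separate sign analysis (presumably done in a subsequent result) shows the inequality fails, giving the "thinning never helps when $\rho<0$" claim.
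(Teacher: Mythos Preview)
Your proposal is correct and follows the same approach as the paper: cancel the common positive factor $(1+\theta)(1+\rho)/(1-\rho)$, cross-multiply by the positive denominators to obtain $(s+\theta)(1-\rho^r)(1+\rho^s)>(r+\theta)(1-\rho^s)(1+\rho^r)$, and rearrange. The paper simply asserts that \eqref{eq:basic} ``follows by rearranging this inequality,'' whereas you carry out the rearrangement explicitly via the identity $(1-\rho^s)(1+\rho^r)=(1-\rho^{r+s})-(\rho^s-\rho^r)$; the algebra you give checks out.
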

\begin{proof}
Because $(1+\theta)(1+\rho)/(1-\rho)>0$, the given inequality in efficiencies is equivalent to
$$
(s+\theta)(1-\rho^r)(1+\rho^s) > (r+\theta)(1-\rho^s)(1+\rho^r).
$$
Equation~\eqref{eq:basic} follows by rearranging this inequality.
\end{proof}

It is obvious that thinning cannot improve efficiency when
$\rho=0$. 
Here we find that the same holds for $\rho<0$.

\begin{proposition}
If $\theta\ge0$ and $-1<\rho\le 0$ then
$\eff(1;\rho,\theta)\ge\eff(k;\rho,\theta)$ for all 
integers $k\ge2$.
\end{proposition}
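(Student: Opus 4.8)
The plan is to reduce the claim to Lemma~\ref{lem:basic}. Fix an integer $k\ge2$ and apply that lemma with $s=1$ and $r=k$; since $-1<\rho\le0$ lies in $(-1,1)$ and $\theta\ge0$, the hypotheses hold, and the lemma says that $\eff(k;\rho,\theta)>\eff(1;\rho,\theta)$ would be equivalent to
$$
2(\theta+1)(\rho-\rho^{k})>(k-1)(1-\rho)(1+\rho^{k}).
$$
So it is enough to show that, for $-1<\rho\le0$ and $\theta\ge0$, this inequality \emph{never} holds when $k\ge2$; that is, I will show its left-hand side is at most its right-hand side, after which Lemma~\ref{lem:basic} gives $\eff(k;\rho,\theta)\le\eff(1;\rho,\theta)$ directly.

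The next step is an elementary sign check, and the one point that needs a little care is that $\rho^{k}$ and $\rho^{k-1}$ change sign with the parity of $k$. To avoid a case split I would factor $\rho-\rho^{k}=\rho\,(1-\rho^{k-1})$: for $-1<\rho\le0$ the exponent $k-1\ge1$ makes $\rho^{k-1}$ either nonpositive (if $k$ is even) or in $[0,1)$ (if $k$ is odd), so $1-\rho^{k-1}>0$ in either case, and multiplying by $\rho\le0$ shows $\rho-\rho^{k}\le0$. Since $\theta+1>0$, the left-hand side above is $\le0$. For the right-hand side, $k-1\ge1>0$, $1-\rho\ge1>0$, and $1+\rho^{k}\ge1-|\rho|^{k}>0$, so it is strictly positive. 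Hence left-hand side $\le 0<$ right-hand side, the displayed inequality fails for every $k\ge2$, and the proposition follows.

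I do not anticipate a real obstacle here; the argument is short once Lemma~\ref{lem:basic} is in hand, and the only thing to watch is the parity-dependent sign, handled by the factorization above. As a cross-check one can also argue straight from \eqref{eq:eff}: the ratio $\eff(k;\rho,\theta)/\eff(1;\rho,\theta)$ equals $\tfrac{1+\theta}{k+\theta}$ times $\tfrac{(1-\rho^{k})(1+\rho)}{(1+\rho^{k})(1-\rho)}$, both factors are positive, the first is $\le1$, and the second is $\le1$ exactly because $\rho\le\rho^{k}$ whenever $\rho\le0$; so the product is $\le1$.
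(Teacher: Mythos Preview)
Your proof is correct and follows essentially the same route as the paper: invoke Lemma~\ref{lem:basic} with $r=k$, $s=1$, then observe that the right side of~\eqref{eq:basic} is positive while the left side is nonpositive, contradicting $\eff(k)>\eff(1)$. You simply flesh out the sign check (via the factorization $\rho-\rho^{k}=\rho(1-\rho^{k-1})$) that the paper leaves implicit, and your direct ratio cross-check is a nice bonus.
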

\begin{proof}
Suppose to the contrary that $\eff(k)>\eff(1)$. Then
Lemma~\ref{lem:basic} with $r=k$ and $s=1$ yields
\begin{align}\label{eq:rearranged}
2(\theta+1)(\rho-\rho^k)&>(k-1)(1-\rho)(1+\rho^k).
\end{align}
Because the right side of~\eqref{eq:rearranged} is positive and the left side is not, we
arrive at a contradiction, proving the result.
\end{proof}

With negative $\rho$ there is an advantage to taking an odd integer $k$
compared to nearby even integers, stemming from the factor 
$(1+\rho^k)/(1-\rho^k)$ in $\eff(k)$.
For instance with Lemma~\ref{lem:basic} we find that $\eff(3;\rho,\theta)>\eff(2;\rho,\theta)$
when $\rho<0$, but $k=1$ remains the best odd integer.  From here on we restrict
attention to $\rho>0$.
Also it is obvious that $k=1$ is best when $\theta=0$ and so we assume $\theta>0$.

Many applications have correlations very close to $1$. 
Then 
\begin{align}\label{eq:limeffk}
\eff(k;1,\theta)\equiv\lim_{\rho\to1}\eff(k;\rho,\theta) = 
k\frac {1+\theta}{k+\theta}. 
\end{align}
The optimal $k$ grows without bound as $\rho\to1$
and it has asymptotic efficiency $\theta+1$. 
From Tables~\ref{tab:bestk} and~\ref{tab:effbestk} we might anticipate
that there are diminishing returns to very large $k$ in this limit.
If we do not insist on maximum efficiency we can use much smaller $k$.
To obtain efficiency at least $1-\eta$ relative to the best $k$  in the large $\rho$
limit we impose
$$ \eff(k;1,\theta) \ge (1+\theta)(1-\eta)$$
for $0<\eta<1$.
Rearranging this inequality we obtain
$$
k\ge\theta(1-\eta)/\eta.
$$
For instance to attain $95$\% efficiency relative to the best $k$
in the large $\rho$ limit,
we may take $\eta=0.05$ and then $k = \lceil 19\theta\rceil$.

The next proposition introduces a critical cost $\theta_*(k)$
beyond which thinning by the factor $k$ is more efficient than
not thinning.  That threshold cost increases with $k$ and the result is
that we may then characterize when $k=1$ (no thinning) is optimal.

\begin{proposition}\label{prop:criticaltheta}
Let $0<\rho<1$ and choose an integer $k\ge2$. 
Then $\eff(k;\rho,\theta) >\eff(1;\rho,\theta)$ if and only if 
\begin{align}\label{eq:crittheta}
\theta > \theta_*(k,\rho) \equiv  \frac{k-1}2\frac{(1-\rho)(1+\rho^k)}{\rho-\rho^k}-1. 
\end{align}
\end{proposition}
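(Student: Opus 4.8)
The plan is to apply Lemma~\ref{lem:basic} directly with $r=k$ and $s=1$, since Proposition~\ref{prop:criticaltheta} is exactly the statement that $\eff(k;\rho,\theta)>\eff(1;\rho,\theta)$, and then to solve the resulting inequality for $\theta$. First I would write out the conclusion of Lemma~\ref{lem:basic} in this special case: $\eff(k;\rho,\theta)>\eff(1;\rho,\theta)$ holds if and only if
\begin{align*}
2(\theta+1)(\rho-\rho^k) > (k-1)(1-\rho)(1+\rho^k).
\end{align*}
This is the same rearranged inequality that already appeared as~\eqref{eq:rearranged} in the proof of the preceding proposition.

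Next I would isolate $\theta$. Since $0<\rho<1$ and $k\ge2$, the factor $\rho-\rho^k = \rho(1-\rho^{k-1})$ is strictly positive, so I may divide both sides by $2(\rho-\rho^k)$ without reversing the inequality. This yields
\begin{align*}
\theta+1 > \frac{k-1}{2}\,\frac{(1-\rho)(1+\rho^k)}{\rho-\rho^k},
\end{align*}
and subtracting $1$ from both sides gives exactly $\theta > \theta_*(k,\rho)$ as defined in~\eqref{eq:crittheta}. Because every step is an equivalence (dividing by a strictly positive quantity, subtracting a constant), the ``if and only if'' is preserved throughout, so nothing further is needed.

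There is really no substantive obstacle here: the entire content has been front-loaded into Lemma~\ref{lem:basic}, and what remains is a one-line algebraic rearrangement. The only points that need a word of care are (i) verifying the sign of $\rho-\rho^k$ so that the division is legitimate and the inequality direction is maintained — this uses precisely the hypotheses $0<\rho<1$ and $k\ge2$ — and (ii) noting that the denominator is nonzero so that $\theta_*(k,\rho)$ is well defined. I would state these observations explicitly and then conclude. If desired, one could also remark that this makes Proposition~\ref{prop:criticaltheta} the formula underlying Table~\ref{tab:bestk}'s entries of $k=1$, but that is not required for the proof itself.
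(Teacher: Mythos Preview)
Your proposal is correct and follows exactly the paper's own approach, which simply cites Lemma~\ref{lem:basic} with $r=k$ and $s=1$; you have just spelled out the algebra and the sign check on $\rho-\rho^k$ that the paper leaves implicit.
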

\begin{proof}
This follows from Lemma~\ref{lem:basic} using $r=k$ and $s=1$. 
\end{proof}

For fixed $\rho\in(0,1)$ and very large $k$ we
find that
$$\theta_*(k,\rho)\doteq\frac{(k-1)(1-\rho)}{2\rho}-1.$$
If $\rho$ is near zero, then choosing large $k$ is only efficient for
very large $\theta$. If $\rho$ is close to $1$ then the threshold
for $k$'s efficiency can be quite low.

\begin{proposition}\label{prop:thetamonotoneink}
For $0<\rho<1$ the critical $\theta_*(k,\rho)$ from~\eqref{eq:crittheta}
is an increasing function of $k\ge 2$.
\end{proposition}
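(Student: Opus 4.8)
The plan is to establish the purely algebraic fact that $\theta_*(k+1,\rho)>\theta_*(k,\rho)$ for every integer $k\ge2$ and every $\rho\in(0,1)$; since $\theta_*$ is evaluated only at integer arguments, this is precisely the asserted monotonicity. The additive constant $-1$ in~\eqref{eq:crittheta} drops out of the comparison, and $(1-\rho)/2>0$, so the inequality to be proved is equivalent to
$$
\frac{k\,(1+\rho^{k+1})}{\rho-\rho^{k+1}}>\frac{(k-1)(1+\rho^k)}{\rho-\rho^k}.
$$
First I would cancel a factor $\rho$ from each side using $\rho-\rho^{k+1}=\rho(1-\rho^k)$ and $\rho-\rho^k=\rho(1-\rho^{k-1})$, and then cross-multiply; both $1-\rho^k>0$ and $1-\rho^{k-1}>0$ because $0<\rho<1$. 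This reduces the goal to the polynomial inequality $k(1+\rho^{k+1})(1-\rho^{k-1})>(k-1)(1+\rho^k)(1-\rho^k)$.

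Next comes a short expansion. The right-hand side equals $(k-1)(1-\rho^{2k})$, while the left-hand side expands to $k\bigl(1-\rho^{k-1}+\rho^{k+1}-\rho^{2k}\bigr)$. Moving everything to one side, the multiples of $1-\rho^{2k}$ combine and the inequality collapses to
$$
1-k\rho^{k-1}+k\rho^{k+1}-\rho^{2k}>0.
$$
Writing this as $1-\rho^{2k}>k\rho^{k-1}(1-\rho^2)$, dividing by $1-\rho^2>0$, and using $1-\rho^{2k}=(1-\rho^2)\sum_{j=0}^{k-1}\rho^{2j}$, the target becomes
$$
\sum_{j=0}^{k-1}\rho^{2j}>k\,\rho^{k-1}.
$$

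This last inequality is exactly the AM--GM inequality applied to the $k$ numbers $1,\rho^2,\rho^4,\dots,\rho^{2(k-1)}$: their geometric mean equals $\bigl(\prod_{j=0}^{k-1}\rho^{2j}\bigr)^{1/k}=\rho^{k-1}$ because $\sum_{j=0}^{k-1}j=k(k-1)/2$, so their arithmetic mean is at least $\rho^{k-1}$. Equality in AM--GM would force all the $\rho^{2j}$ to coincide, i.e.\ $\rho^2=1$, which is excluded by $\rho\in(0,1)$; hence the inequality is strict, which is what we need. I do not expect a genuine obstacle here: the only step requiring care is the bookkeeping in the expansion, and once the common factor $1-\rho^2$ is pulled out the conclusion is transparent. (One can replace the AM--GM step by pairing the $j$th and $(k-1-j)$th terms via $\rho^{2j}+\rho^{2(k-1-j)}\ge 2\rho^{k-1}$, handling the middle term separately when $k$ is odd.)
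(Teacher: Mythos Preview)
Your argument is correct. The reduction to $1-k\rho^{k-1}+k\rho^{k+1}-\rho^{2k}>0$ is clean, and recognising this as $(1-\rho^2)\sum_{j=0}^{k-1}\rho^{2j}>k\rho^{k-1}(1-\rho^2)$ followed by AM--GM on the geometric progression $1,\rho^2,\dots,\rho^{2(k-1)}$ finishes it with strict inequality since the terms are distinct for $\rho\in(0,1)$ and $k\ge2$.

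The paper takes a genuinely different route. It rewrites $\theta_*(k,\rho)$ in terms of $\phi_*(r,\rho)=r(1+\rho^{r+1})/(1-\rho^r)$ with $r=k-1$, treats $r$ as a continuous variable on $[1,\infty)$, and differentiates with respect to $r$. Showing the numerator of $\partial\phi_*/\partial r$ is nonnegative then requires a two-step calculus argument: first that $\eta(\rho,1)\ge0$ via three derivatives in $\rho$, and second that $\eta(\rho,r)\ge\eta(\rho,1)$ via a further sign analysis of $\partial\eta/\partial r$ and an auxiliary function $F(\rho,r)$. Your proof is substantially shorter and entirely elementary, exploiting that only integer $k$ matter; the paper's approach, by contrast, establishes the stronger fact that $\phi_*$ is increasing for all real $r\ge1$, which is not actually needed for the proposition or its downstream use. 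Incidentally, the polynomial $1-k\rho^{k-1}+k\rho^{k+1}-\rho^{2k}$ you isolate is exactly the quantity $N(\rho,k)$ that the paper later shows is nonnegative (by calculus) in the proof of Theorem~\ref{thm:effbound}, so your AM--GM observation would streamline that argument as well.
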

\begin{proof}
Let $r=k-1\ge1$ and put 
$$\phi_*(r,\rho) = \frac{2(\theta_*+1)\rho}{1-\rho}=
r\frac{1+\rho^{r+1}}{1-\rho^{r}}.
$$
It suffices to show that $\phi_*$ is increasing over $r\in[1,\infty)$.
Differentiating,
\begin{align*}
\frac{\rd\phi_*}{\rd r}
&=
\frac{(1+\rho^{r+1}+r\rho^{r+1}\log(\rho))(1-\rho^r)
-r(1+\rho^{r+1})(-\rho^r\log(\rho))
}
{(1-\rho^r)^2}
\end{align*}
and we need only show that the numerator 
%
$$
\eta(\rho,r) = 1-\rho^r+\rho^{r+1}-\rho^{2r+1}+(\rho^r+\rho^{r+1})\log(\rho^r)
$$
is positive.
We will show that $\eta(\rho,r)\ge\eta(\rho,1)\ge0.$

First we show that $\eta(\rho)\equiv\eta(\rho,1)\ge0$. This function and its 
first three derivatives are 
\begin{align*}
\eta(\rho) &= 1-\rho+\rho^{2}-\rho^{3}+(\rho+\rho^{2})\log(\rho),\\
\eta'(\rho) &= 3\rho-3\rho^2+(1+2\rho)\log(\rho),\\
\eta''(\rho) &= 5-6\rho+\rho^{-1}+2\log(\rho),\quad\text{and}\\
\eta'''(\rho) &= -6-\rho^{-2}+2\rho^{-1}\\
& = -5-(1-\rho^{-1})^2. 
\end{align*} 
Because $\eta'''\le0$ and $\eta''(1-)=0$ we find that $\eta''(\rho)\ge0$. 
Similarly, $\eta'(1-)=0$ and so $\eta'(\rho)\le0$. 
Finally, $\eta(1-)=0$ so that $\eta(\rho)=\eta(\rho,1)\ge0$,
completing the first step. 

For the second step, treating $r\ge1$ as a continuous variable,
\begin{align*}
\frac{\partial}{\partial r}
\eta(\rho,r) &= -\rho^r\log(\rho)+\rho^{r+1}\log(\rho) 
-2\rho^{2r+1}\log(\rho)\\
&\quad+(\rho^r+\rho^{r+1})\log(\rho) 
+r(\rho^r+\rho^{r+1})\log^2(\rho)\\
&=2(\rho^{r+1}-\rho^{2r+1})\log(\rho)+r(\rho^r+\rho^{r+1})\log^2(\rho)\\
&=\rho^r\log(\rho) \bigl(2\rho-2\rho^{r+1}+r\log(\rho)+r\rho\log(\rho)\bigr). 
\end{align*}
We now show that this partial derivative is nonnegative.
Because $\rho^r\log(\rho)\le0$, it suffices to show that the second factor 
$F(\rho,r)\le0$ where 
$F(\rho,r)\equiv2\rho-2\rho^{r+1}+r\log(\rho)+r\rho\log(\rho)$. 
Differentiating yields 
\begin{align*}
\frac{\partial}{\partial \rho} F(\rho,r) 
&= 2-2(r+1)\rho^r+r\rho^{-1}+r\log(\rho)+r,\quad\text{and}\\
\frac{\partial^2}{\partial \rho^2} F(\rho,r) 
&= -2r(r+1)\rho^{r-1}+r\rho^{-1}(1-\rho^{-1})\le0. 
\end{align*}
Proceeding as before, $(\partial/\partial\rho)F(1-,r)=0$ so 
this first partial derivative is nonnegative. Then $F(1-,r)=0$ so $F$ is nonpositive 
as required. 
\end{proof}

\begin{theorem}
For $0<\rho<1$, the choice $k=1$ maximizes efficiency $\eff(k;\rho,\theta)$
over integers $k\ge1$ whenever 
\begin{align}\label{eq:thetabound}
\theta \le\frac{(1-\rho)^2}{2\rho}.
\end{align}
For $\theta>0$, the choice $k=1$ maximizes efficiency $\eff(k;\rho,\theta)$ if
\begin{align}\label{eq:critrhofor1}
\rho \le 1+\theta -\sqrt{\theta^2+2\theta}.
\end{align}
\end{theorem}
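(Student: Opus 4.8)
The plan is to reduce the statement to the two preceding propositions. By Proposition~\ref{prop:criticaltheta}, for every integer $k\ge2$ one has $\eff(k;\rho,\theta)>\eff(1;\rho,\theta)$ exactly when $\theta>\theta_*(k,\rho)$; contrapositively, $\eff(k;\rho,\theta)\le\eff(1;\rho,\theta)$ whenever $\theta\le\theta_*(k,\rho)$. By Proposition~\ref{prop:thetamonotoneink} the threshold $\theta_*(k,\rho)$ is increasing in the integer $k\ge2$, so its smallest value over $k\ge2$ is $\theta_*(2,\rho)$. Consequently, if $\theta\le\theta_*(2,\rho)$ then $\theta\le\theta_*(k,\rho)$ for every integer $k\ge2$, hence $\eff(1;\rho,\theta)\ge\eff(k;\rho,\theta)$ for all such $k$, i.e.\ $k=1$ maximizes $\eff(\,\cdot\,;\rho,\theta)$ over integers $k\ge1$.

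The remaining work for the first assertion is to evaluate $\theta_*(2,\rho)$. Putting $k=2$ in~\eqref{eq:crittheta} and using $\rho-\rho^2=\rho(1-\rho)$ gives
$$
\theta_*(2,\rho)=\frac12\,\frac{(1-\rho)(1+\rho^2)}{\rho-\rho^2}-1
=\frac{1+\rho^2}{2\rho}-1=\frac{(1-\rho)^2}{2\rho}.
$$
Thus condition~\eqref{eq:thetabound} is precisely $\theta\le\theta_*(2,\rho)$, and the first claim follows from the preceding paragraph.

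For the second assertion I would recast~\eqref{eq:thetabound} as a quadratic inequality in $\rho$. Multiplying through by $2\rho>0$ yields $2\rho\theta\le(1-\rho)^2$, equivalently $\rho^2-2(1+\theta)\rho+1\ge0$. The roots of $\rho^2-2(1+\theta)\rho+1$ are $\rho_\pm=(1+\theta)\pm\sqrt{(1+\theta)^2-1}=(1+\theta)\pm\sqrt{\theta^2+2\theta}$, so the inequality holds iff $\rho\le\rho_-$ or $\rho\ge\rho_+$. A one-line check (squaring, which reduces each case to $1>0$) shows $0<\rho_-<1<\rho_+$ for all $\theta>0$. Since we only consider $\rho\in(0,1)$, the branch $\rho\ge\rho_+$ is vacuous, leaving $\rho\le\rho_-=1+\theta-\sqrt{\theta^2+2\theta}$, which is~\eqref{eq:critrhofor1}.

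There is no real obstacle here: the substantive analytic step, namely monotonicity of $\theta_*(k,\rho)$ in $k$, was already established in Proposition~\ref{prop:thetamonotoneink}. The only points requiring care are the elementary simplification $\theta_*(2,\rho)=(1-\rho)^2/(2\rho)$ and, when translating the bound into a condition on $\rho$, confirming that the larger root $\rho_+$ falls outside $(0,1)$ so it may be discarded.
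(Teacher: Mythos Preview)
Your proof is correct and follows essentially the same route as the paper's: use the monotonicity of $\theta_*(k,\rho)$ in $k$ to reduce to $k=2$, compute $\theta_*(2,\rho)=(1-\rho)^2/(2\rho)$, and then solve the resulting quadratic in $\rho$, discarding the root exceeding $1$. Your write-up is slightly more explicit about verifying $0<\rho_-<1<\rho_+$, but the argument is otherwise identical.
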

\begin{proof}
From the monotonicity of $\theta_*$ in Proposition~\ref{prop:thetamonotoneink},
if any $k>1$ is better than $k=1$ then $\eff(2;\rho,\theta)>\eff(1;\rho,\theta)$.
Then $k=1$ is most efficient if
$$\theta\le\theta_*(2,\rho) = 
\frac{(1-\rho)(1+\rho^2)}{2(\rho-\rho^2)}-1= \frac{(1-\rho)^2}{2\rho},$$
establishing~\eqref{eq:thetabound}.
The equation $\theta = (1-\rho)^2/(2\rho)$ has two roots $\rho$ 
for fixed $\theta>0$
and~\eqref{eq:thetabound} holds for $\rho$ 
outside the open interval formed by those two roots.
One of those roots is larger than $1$ and the other is given as the right hand side of~\eqref{eq:critrhofor1}.
\end{proof}

The upper limit in~\eqref{eq:critrhofor1} is asymptotically
$1/(2\theta)$ for large $\theta$.
That is $\lim_{\theta\to\infty} \theta
(1+\theta -\sqrt{\theta^2+2\theta})=1/2$.

\begin{theorem}\label{thm:effbound}
For integer lag $k\ge1$, cost $\theta>0$ and autocorrelation
$0<\rho<1$, the function $\eff(k;\rho,\theta)$ is nondecreasing in $\rho$,
and so
$$
\eff(k;\rho,\theta)\le\theta+1.
$$
\end{theorem}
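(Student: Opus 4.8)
The plan is to prove the stronger monotonicity claim and then read off the numerical bound. Write
\begin{equation*}
\eff(k;\rho,\theta)=\frac{1+\theta}{k+\theta}\,g_k(\rho),\qquad g_k(\rho)\equiv\frac{1+\rho}{1-\rho}\cdot\frac{1-\rho^k}{1+\rho^k}.
\end{equation*}
Since the prefactor $(1+\theta)/(k+\theta)$ does not involve $\rho$, it suffices to show that $g_k$ is nondecreasing on $(0,1)$ for each integer $k\ge1$. The case $k=1$ is immediate, since $g_1\equiv1$, so assume $k\ge2$.

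First I would differentiate the logarithm. Using $\tfrac1{1+\rho}+\tfrac1{1-\rho}=\tfrac{2}{1-\rho^2}$, together with the same identity applied with $\rho^k$ in place of $\rho$, one obtains
\begin{equation*}
\frac{\rd}{\rd\rho}\log g_k(\rho)=\frac{2}{1-\rho^2}-\frac{2k\rho^{k-1}}{1-\rho^{2k}}.
\end{equation*}
So it is enough to establish the inequality $1-\rho^{2k}\ge k\rho^{k-1}(1-\rho^2)$ for $\rho\in(0,1)$.

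The key step is to factor $1-\rho^{2k}=(1-\rho^2)\sum_{j=0}^{k-1}\rho^{2j}$ and cancel the positive factor $1-\rho^2$, which reduces the desired inequality to $\sum_{j=0}^{k-1}\rho^{2j}\ge k\rho^{k-1}$. This is exactly the arithmetic--geometric mean inequality for the $k$ numbers $1,\rho^2,\rho^4,\dots,\rho^{2(k-1)}$: their geometric mean is $\bigl(\rho^{2(0+1+\cdots+(k-1))}\bigr)^{1/k}=\rho^{k-1}$, so their arithmetic mean $\tfrac1k\sum_{j=0}^{k-1}\rho^{2j}$ is at least $\rho^{k-1}$. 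Hence $\tfrac{\rd}{\rd\rho}\log g_k\ge0$, so $g_k$, and therefore $\eff(k;\rho,\theta)$, is nondecreasing in $\rho$ on $(0,1)$.

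Finally, since $\eff(k;\rho,\theta)$ is continuous and nondecreasing in $\rho$, for every $\rho\in(0,1)$ it is bounded above by $\lim_{\rho'\to1^-}\eff(k;\rho',\theta)=k(1+\theta)/(k+\theta)$, the limit being~\eqref{eq:limeffk}; and $k(1+\theta)/(k+\theta)=(1+\theta)\,k/(k+\theta)\le1+\theta$ because $\theta>0$. Both assertions of the theorem follow. The only genuine obstacle is spotting the factorization together with the AM--GM step; the rest is routine, and one should merely double-check the $k=1$ endpoint, where the intermediate inequality holds with equality, consistent with $g_1\equiv1$.
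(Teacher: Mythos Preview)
Your proof is correct and takes a genuinely different route from the paper. The paper differentiates $g_k(\rho)$ directly (not its logarithm), simplifies the numerator of the derivative to (twice) $N(\rho,k)=k\rho^{k+1}-k\rho^{k-1}-\rho^{2k}+1$, and then shows $N\ge0$ by a two-level sign chase: differentiate $N$ to get a factor $F(\rho,k)=(k+1)\rho^2-(k-1)-2\rho^{k+2}$, show $F\le0$ via $F(1^-,k)=0$ together with $\partial F/\partial\rho\ge0$, conclude $\partial N/\partial\rho\le0$, and combine with $N(1^-,k)=0$. Your argument replaces all of this with a single transparent step: the logarithmic derivative, the factorization $1-\rho^{2k}=(1-\rho^2)\sum_{j=0}^{k-1}\rho^{2j}$, and AM--GM on the geometric progression $1,\rho^2,\dots,\rho^{2(k-1)}$. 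The paper's approach is mechanical and would generalize to less structured expressions; yours is shorter, exposes exactly why the inequality holds, and gives strict monotonicity for $k\ge2$ (equality in AM--GM would force $\rho=1$), which the paper does not note. Both finish the numerical bound the same way via~\eqref{eq:limeffk}.
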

\begin{proof}
The second conclusion follows from the first using
the limit in~\eqref{eq:limeffk}.
The derivative of $\eff(k;\rho,\theta)\times(k+\theta)/(1+\theta)$ 
with respect to $\rho$ is
\begin{align}\label{eq:dedrho}
\frac{(1-k\rho^{k-1}-(k+1)\rho^k)(1-\rho)(1+\rho^k)
-(1+\rho)(1-\rho^k)(-1+k\rho^{k-1}-(k+1)\rho^k)
}
{(1-\rho)^2(1+\rho^k)^2}.
\end{align}
It suffices to show that the numerator in~\eqref{eq:dedrho} is non-negative
for $0<\rho<1$.
The numerator simplifies to twice 
$N(\rho,k) = k\rho^{k+1}-k\rho^{k-1}-\rho^{2k}+1$.
Now 
\begin{align*}
\frac{\partial}{\partial\rho}N(\rho,k) 
&=k(k+1)\rho^k-k(k-1)\rho^{k-2}-2k\rho^{2k-1}=k\rho^{k-2}F(\rho,k) 
\end{align*} 
for a factor $F(\rho,k)=(k+1)\rho^2-(k-1)-2\rho^{k+2}$. 
Because $F(1-,k)=0$ and $\partial F(\rho,k)/\partial\rho 
=2(k+1)(\rho-\rho^k)\ge0$ we have $F(\rho,k)\le0$. 
Therefore $\partial N(\rho,k)/\partial\rho\le0$ and 
because $N(1-,k)=0$ we conclude that $N(\rho,k)\ge0$
and so $\eff(k;\rho,\theta)$ is nondecreasing in $\rho$.
\end{proof}

Next we consider how to locate the maximizer over $k$ of $\eff(k;\rho,\theta)$.
The next result on relaxing $\log(k)$ to be a nonnegative real value helps.

\begin{proposition}\label{prop:itslogconcave}
For $\theta>0$ and $\rho\in(0,1)$ the function
$\log( \eff(e^x;\rho,\theta))$ is strictly concave in $x\ge0$.
\end{proposition}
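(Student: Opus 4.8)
I want to show that $g(x) \equiv \log\bigl(\eff(e^x;\rho,\theta)\bigr)$ is strictly concave on $[0,\infty)$. Writing $k = e^x$ (treating $k$ as a continuous positive real) and taking logarithms of the formula~\eqref{eq:eff}, I have
$$
g(x) = \log(1+\theta) + \log\frac{1+\rho}{1-\rho} - \log(e^x + \theta) + \log(1 - \rho^{e^x}) - \log(1 + \rho^{e^x}).
$$
The first two terms are constants in $x$ and drop out under differentiation, so strict concavity of $g$ reduces to showing that
$$
h(x) \equiv -\log(e^x + \theta) \quad\text{and}\quad q(x) \equiv \log(1 - \rho^{e^x}) - \log(1 + \rho^{e^x})
$$
have a strictly negative second derivative when summed. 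The plan is to treat the two pieces separately: show $h$ is strictly concave outright, and show $q$ is concave (at least non-strictly). Their sum is then strictly concave, which gives the result.

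\textbf{The two pieces.} For $h(x) = -\log(e^x+\theta)$, a direct computation gives $h'(x) = -e^x/(e^x+\theta)$ and
$$
h''(x) = -\frac{e^x(e^x+\theta) - e^x\cdot e^x}{(e^x+\theta)^2} = -\frac{\theta e^x}{(e^x+\theta)^2},
$$
which is strictly negative for every $x$ since $\theta>0$. So $h$ contributes strict concavity all by itself. For the autocorrelation piece, it is cleanest to substitute $t = \rho^{e^x} = e^{e^x \log\rho}$; since $0<\rho<1$, as $x$ ranges over $[0,\infty)$ the quantity $e^x$ ranges over $[1,\infty)$ and $t = \rho^{e^x}$ decreases monotonically through $(0,\rho\,]$. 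Then $q = \log\frac{1-t}{1+t} = \log(1-t) - \log(1+t)$, which is a decreasing, concave function of $t\in(0,1)$ (its second derivative in $t$ is $-1/(1-t)^2 + 1/(1+t)^2 < 0$). The remaining task is to argue that composing this with the map $x \mapsto t(x)$ preserves concavity. Here I would use the chain rule carefully: $q$ as a function of $x$ has
$$
\frac{d^2 q}{dx^2} = \phi''(t)\,\bigl(t'(x)\bigr)^2 + \phi'(t)\,t''(x),
$$
where $\phi(t) = \log\frac{1-t}{1+t}$. The first term is negative because $\phi'' < 0$. For the second term I note $\phi'(t) = -1/(1-t) - 1/(1+t) = -2/(1-t^2) < 0$, and $t(x) = e^{e^x\log\rho}$ has $t'(x) = e^x\log\rho\cdot t < 0$ and $t''(x) = (e^x\log\rho + e^{2x}\log^2\rho)\,t = e^x\log\rho\,(1 + e^x\log\rho)\,t$. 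The sign of $t''$ is not fixed — it flips at $e^x = -1/\log\rho$ — so the second term is not automatically of a good sign. That is the crux.

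\textbf{The main obstacle and how to get around it.} The difficulty is exactly this: the composition $x\mapsto\rho^{e^x}$ is neither convex nor concave over the whole range, so one cannot simply invoke ``concave increasing of concave.'' The robust fix is to not split $q$ into $\phi\circ t$ at all, but instead to differentiate $q(x)$ directly twice in $x$ and prove the resulting expression is $\le 0$. Setting $u = e^x$ and $c = \log\rho < 0$, one has $\rho^{e^x} = e^{uc}$, and after differentiating $q = \log(1-e^{uc}) - \log(1+e^{uc})$ twice in $x$ (using $du/dx = u$), the second derivative collapses to an expression of the form (rational in $e^{uc}$) times (a polynomial in $u$ and $uc$) — and the key inequality to establish is that a certain auxiliary function, something like $G(u) = 1 + uc + e^{uc}(1 - uc)$ or a close relative, has a definite sign for $u\ge 1$, $c<0$. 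I expect this to follow the same ``evaluate at the boundary, then show a derivative has constant sign'' pattern used repeatedly in the proofs of Propositions~\ref{prop:thetamonotoneink} and Theorem~\ref{thm:effbound}: check the value at $\rho\to 1-$ (equivalently $c\to 0-$), then show the relevant derivative in $\rho$ (or in $c$) does not change sign. Since even if $q$ turns out only weakly concave, adding the strictly concave $h$ from the first piece already yields strict concavity of $g$, I have a comfortable margin: I only need $q'' \le 0$, not a strict inequality. So the proof structure is: (i) reduce to $h'' + q'' < 0$; (ii) compute $h'' = -\theta e^x/(e^x+\theta)^2 < 0$ exactly; (iii) show $q'' \le 0$ by the boundary-plus-monotone-derivative technique on an auxiliary polynomial-type function; (iv) conclude.
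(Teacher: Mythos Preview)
Your plan is correct and matches the paper's proof almost exactly: the same decomposition into the strictly concave $-\log(e^x+\theta)$ piece and the (weakly) concave autocorrelation piece, with the latter handled by direct second differentiation followed by a boundary-plus-monotone-derivative argument on an auxiliary one-variable function. In the paper the auxiliary function turns out to be $t(z)=1-z^2+\log z+z^2\log z$ in the variable $z=\rho^{e^x}\in(0,1]$, shown nonpositive via $t(1)=t'(1)=t''(1)=0$ and $t'''(z)=2z^{-3}+2z^{-1}>0$, which is exactly the technique you anticipated.
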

\begin{proof}
We write $ \log( \eff(e^x;\rho,\theta)) = g(x)+f(x)$ for 
\begin{align*}
g(x)&=\log\Bigl(\frac{1+\theta}{e^x+\theta}\Bigr)+\log\Bigl(\frac{1+\rho}{1-\rho}\Bigr),\quad\text{and}\quad 
f(x)=\log\Bigl(\frac{1-\rho^{e^x}}{1+\rho^{e^x}}\Bigr). 
\end{align*}
Now $g''(x) = -\theta e^x/(e^x+\theta)^2$, so $g$ is strictly concave for $\theta>0$. It now suffices 
to show that $f$ is concave. 
Let $h=h(x)=\rho^{e^x}$. Then 
\begin{align*}
f'(x) &= \frac{-h'}{1-h}-\frac{h'}{1+h} = \frac{-2h'}{1-h^2}= \frac{-2h\log(h)}{1-h^2},
\end{align*}
using $h'=h\log(h)$, and so 
\begin{align}\label{eq:halfcurv}
-\frac12f''(x) &= \frac{
(h'\log(h)+h')(1-h^2)-h\log(h)(-2hh') 
}{(1-h^2)^2}. 
\end{align}
The numerator in~\eqref{eq:halfcurv} is 
$h\log(h)\bigl(1-h^2+\log(h)+h^2\log(h)\bigr)$. 
We need only show that $t(z)=1-z^2+\log(z)+z^2\log(z)\le0$ on $(0,1]$ which 
includes all relevant values of $h$. 
The result follows because $t(1)=t'(1)=t''(1)=0$ and $t'''(z)=2z^{-3}+2z^{-1}\ge0$. 
\end{proof}

From Proposition~\ref{prop:itslogconcave},
we know that if we relax $k$ to real values in $[1,\infty)$, then
$\eff(k;\rho,\theta)$ is either nonincreasing as $k$ increases
from $1$, or it increases to a peak 
 before descending, or it increases indefinitely as $k$ increases.
Because $\lim_{k\to\infty}\eff(k;\rho,\theta)=0$ we can rule out the third possibility.

As a result, we know that if $\eff(k';\rho,\theta)<\eff(k;\rho,\theta)$
for $k'>k$, then the optimal value of $k$ is in the set $\{1,2,\dots,k'-1\}$.
Neither the value $k'$ nor any larger value can be optimal because the
function $\eff(k;\rho,\theta)$ is already in its descending region by the
time we consider lags as large as $k'$.

\section{Approximately autoregressive dependence}\label{sec:nearauto}

In this section we consider how efficiencies and optimal thinning factors
behave when the autocorrelations are nearly but not exactly
of autoregressive form.
Recall that the efficiency of thinning to every $k$'th observation versus
not thinning ($k=1$) is given by $\efford(k)$ of~\eqref{eq:efford}.
More generally, the efficiency of thinning by factor $r\in\natu$ versus thinning by
factor $s\in\natu$ is
$$\efford(r,s) = \frac{\efford(r)}{\efford(s)}=
\frac{s+\theta}{r+\theta}
\frac{
1+2\sum_{\ell=1}^\infty\rho_{s\ell} 
}{
1+2\sum_{\ell=1}^\infty\rho_{r\ell}
}.
$$

Thinning should help for autocorrelations that are approximately autoregressive
and  decay very slowly.  Then the numerator in $\efford(k,1)$ will be large.
For there to be a meaningful efficiency gain, the denominator in $\efford(k,1)$
should not be too large.  That is reasonable as we ordinarily expect that $\rho_{k\ell} <\rho_\ell$.
We suppose now that the autocorrelations of $y_i$ satisfy
\begin{align}\label{eq:bounds}
\urho^\ell \le \rho_\ell \le \orho^\ell .
\end{align}
Then $\rho_\ell$ need not follow exactly the autoregressive form and indeed
they need not be monotonically decreasing in $\ell$.

Under condition~\eqref{eq:bounds} we can get upper and lower bounds on the
summed autocorrelations and these yield
\begin{align*}
\frac{s+\theta}{r+\theta}
\frac{1+\urho^{s}}{1-\urho^{s}}\Bigm/
\frac{1+\orho^{r}}{1-\orho^{r}}
\le 
\efford(r,s)
\le 
\frac{s+\theta}{r+\theta}
\frac{1+\orho^{s}}{1-\orho^{s}}\Bigm/
\frac{1+\urho^{r}}{1-\urho^{r}}\equiv U_{rs}. 
\end{align*}
Any value $r$ for which $U_{rs}<1$ holds for some $s\ne r$
cannot be the optimal thinning factor.  

There are two main ways that thinning can help.  One is that the
autocorrelations decay slowly. The other is that the cost $\theta$ 
to compute $y_i=f(x_i)$ is large.  We consider one example of each
type.

First, consider a slow but not extremely slow correlation decay,  $\urho=0.98$ and $\orho=0.99$ with moderately large $\theta=10$.
If $U_{1k}<1$, then thinning at factor $k$ must be more efficient than not thinning for
any autocorrelation satisfying~\eqref{eq:bounds}. 
We find that this holds whenever $3\le k\le 1078$.
If $U_{1k}<1/2$, then thinning at factor $k$ must be at least twice as efficient as not thinning.
This holds for $6\le k\le 529$.  If $28\le k\le 195$ then thinning is at least four
times as efficient as not thinning.  In this not very extreme example, there are gains from
thinning and they hold over a wide range of thinning factors $k>1$.
The thinning factors that are not dominated
by some other thinning factor are given by $8\le k\le 220$. Any other $k$ cannot be optimal.
The given values of $\urho$, $\orho$ and $\theta$ allow for a large
set of possible optimal $k$, but they do not allow for $k=1$ to be optimal.  Instead, $k=1$
is suboptimal by at least four-fold.

As a second example, consider a high cost $\theta=100$ with moderately slow correlation
decay given by $\urho = 0.9$ and $\orho =0.95$.  Then there is at least a $10$-fold efficiency
gain for any $34\le k\le 87$ and 
the optimal $k$ must satisfy $16\le k\le 74$.

\section{Optimization}\label{sec:optimization}

The most direct way to maximize $\eff(k;\rho,\theta)$ over
$k\in\natu$ is to compute $\eff(k;\rho,\theta)$ for all $k=1,\dots,\kmax$
and then choose 
 $$k_*=k_*(\rho,\theta) = \arg\max_{1\le k\le \kmax}\eff(k;\rho,\theta).$$
It is necessary to find a value $\kmax$ that we can be sure is 
at least as large as $k_*$.  
In light of the discussion following the log concavity Proposition~\ref{prop:itslogconcave},
we need only find a value
$\kmax$ where $\eff(\kmax;\rho,\theta)<\eff(k';\rho,\theta)$ holds
for some $k'<\kmax$.  We do this by repeatedly doubling $k$ until
we encounter a decreased efficiency.


For moderately large values of $\theta$ and $1/(1-\rho)$ it is
numerically very stable to compute $\eff(k;\rho,\theta)$.  But for
more extreme cases it is better to work with
\begin{align*}
\leff(k) \equiv \log(\eff(k;\rho,\theta) )&= 
c(\rho,\theta) - \log(k+\theta) 
+\log(1-\rho^k)-\log(1+\rho^k),
\end{align*}
where $c(\rho,\theta) = \log[ (1+\theta)(1+\rho)/(1-\rho)]$ does not
depend on $k$.
Many computing environments contain a special function $\logonep(x)$
that is a numerically more precise way to compute $\log(1+x)$ for
small $|x|$.
Ignoring $c$ we then work with
\begin{align*}
\leff'(k) \equiv 
- \log(k+\theta) 
+\logonep(-\rho^k)-\logonep(\rho^k).
\end{align*}

Now, to find $\kmax$ we set
$m=1$ and then
while  $\leff'(2m) > \leff'(m)$ set $m = 2m$.    
At convergence take $\kmax=2m$.
R code to implement this optimization is given in the Appendix.
Only in extreme circumstances will $\kmax$ be larger than
one million, and so the enumerative approach will ordinarily have a trivial cost
and it will not then be necessary to use more sophisticated searches.
It takes about $1/6$ of a second for this search to produce the values in 
Tables~\ref{tab:bestk} and~\ref{tab:effbestk} on a MacBook Air.
If $\kmax$ is thought to be extraordinarily large then one could 
run a safeguarded Newton method to find $x_*=\arg\max_x\log(\eff(e^x;\rho,\theta))$
and whichever of $k=\lceil e^{x_*}\rceil$ or $k=\lfloor e^{x_*}\rfloor$ maximizes $\eff(k;\rho,\theta)$.

\section{Discussion}\label{sec:conclusions}

Contrary to common recommendations,
thinning a Markov chain sample can improve 
statistical efficiency.  This phenomenon always holds
for monotonically decreasing nonnegative autocorrelations
if the cost of evaluating $f$ is large enough.
When the correlations follow
an autoregressive model, the optimal subsampling rate 
grows rapidly as $\rho$ increases towards $1$
becoming unbounded in the limit. Sometimes 
those large subsampling rates correspond to only modest 
efficiency improvements. 
The magnitude of the improvement depends greatly on 
the ratio $\theta$ of the cost of function evaluation to 
the cost of updating the Markov chain. When $\theta$ is 
of order $1$ or higher, a meaningful efficiency improvement 
can be attained by thinning such a Markov chain. 
When the autocorrelations decay slowly but do not necessarily
follow the exact autoregression pattern we may still find that
thinning brings a large efficiency gain. 

In some problems, the cost $\theta$ may have an important
dependence on $k$.
In an MCMC, it is common
to have $x_{t+1}=x_t$ because a proposal was rejected.
In such cases $f(x_{t+1})=f(x_t)$ need not be recomputed.
Then an appropriate cost measure for $\theta$ would be the 
CPU time taken to evaluate $f$, normalized by the time to
generate a proposal, and then multiplied by the acceptance rate.
Larger values of $k$ increase the chance that a proposal
has been accepted and hence the average cost of computing $f$.
For instance, \cite{gelman1996efficient}
find that an acceptance rate of $\alpha=0.234$ is most efficient in high
dimensional Metropolis random walk sampling.
Then when thinning by factor $k$, the appropriate cost
is $\theta(1-\alpha^k)$ where $\theta$ is the cost of an accepted proposal
and the efficiency becomes 
$$
\frac{1+\theta(1-\alpha)}{k+\theta(1-\alpha^k)}
\frac{1+\rho}{1-\rho}
\frac{1-\rho^k}{1+\rho^k}
$$
under an autoregressive assumption.
Optimizing this case is outside the scope of this article.
It is more difficult because the autocorrelation $\rho$ depends on 
the acceptance rate $\alpha$. At any level of thinning, the optimal $\alpha$
may depend on $\theta$.

It is also common that one has multiple functions $f_1,\dots,f_M$
to evaluate. They might each have different optimal thinning
ratios. Optimizing the efficiency over such a collection raises
issues that are outside the scope of this article.  For instance,
the cost of evaluating a subset of these functions may be
subadditive in the costs of evaluating them individually due
to shared computations. The importance of estimating those
$M$ different means may also be unequal. Finally, there may be
greater statistical efficiency for comparisons of those
corresponding means when the $f_j$ are evaluated on common
inputs.

\section*{Acknowledgments}
This work was supported by the NSF under
grants DMS-1407397 and DMS-1521145.
I thank Hera He, Christian Robert, 
Hans Andersen, Michael Giles and some anonymous reviewers
for helpful comments.

\bibliographystyle{apalike}
\bibliography{thinning}

\vfill\eject
\section*{Appendix: R code}
\begin{verbatim}
# Code to find the optimal amount of thinning for a Markov sample.
# It costs 1 unit to advance the chain, and theta units to evaluate 
# the function. The autocorrelation is rho.

effk = function(k,theta,rho){
# Asymptotic efficiency of thinning factor k vs using k=1
# Compute and exponentiate log( effk )
# NB: log1p( x ) = log( 1+x )
t1 = log1p(theta) - log(k+theta)
t2 = log1p(rho) - log1p(-rho)
t3 = log1p(-rho^k) - log1p(rho^k)

exp( t1 + t2 + t3 )
}

leffkprime = function(k,theta,rho){
# Log of asymptotic efficiency at thinning factor k.
# It ignores terms that do not depend on k.

if( any( rho!=0 ) & any( abs(rho^k) == 0) ){
# Basic detection of underflow while still allowing rho=0
  badk = min( k[abs(rho^k)==0] )
  msg = paste("Underflow for k >=",badk,sep=" ")      
  stop(msg)
}
- log(k+theta) + log1p(-rho^k) - log1p(rho^k)
}

getkmax = function(theta,rho){
# Find an upper bound for the optimal thinning fraction k
if( theta<0 )stop("Negative theta")
if( rho<0 )stop("Negative rho")
if( rho >=1 )stop("rho too close to one")

m=1
while( leffkprime(2*m,theta,rho) > leffkprime(m,theta,rho) )
  m = m*2    
2*m
}

kopt = function(theta,rho,klimit=10^7){
# Find optimal k for the given theta and rho.
# Stop if kmax is too large. That usually
# means that theta is very large or rho is very nearly one

kmax = getkmax(theta,rho)    
if( kmax > klimit ){
  msg = paste("Optimal k too expensive. It requires checking",kmax,"values.")
  stop(msg)
}
leffvals = leffkprime( 1:kmax,theta,rho )
best = which.max(leffvals)
best
}

kok = function(theta,rho,klimit=10^7,eta=.05){
# Find near optimal k for the given theta and rho.
# This is the smallest k with efficiency >= 1-eta times best.
# NB: computations in kopt are repeated rather than
# saved. This is inefficient but the effect is minor.

best = kopt(theta,rho,klimit)
leffvals = leffkprime( 1:best,theta,rho )
ok = min( which(leffvals >= leffvals[best]  + log1p(-eta) ) )
ok
}

kopttable = function( thvals = 10^c(-3:3), rhovals = c(.1,.5,1-10^-c(1:6)),eta=.05){

# Prepare tables of optimal k, its efficiency, and smallest 
# k with at least 1-eta efficiency

T = length(thvals)
R = length(rhovals)

bestk = matrix(0,T,R)
row.names(bestk) = thvals
colnames(bestk) = rhovals
effbk = bestk
okk = bestk

for( i in 1:T )
for( j in 1:R ){
  theta = thvals[i]
  rho   = rhovals[j]
  bestk[i,j] = kopt(theta,rho)
  effbk[i,j] = leffkprime(bestk[i,j],theta,rho)-leffkprime(1,theta,rho)
  effbk[i,j] = exp(effbk[i,j])
  okk[i,j]   = kok(theta,rho,eta=eta)
}

list( bestk=bestk, effbk=effbk, okk=okk )
}
\end{verbatim}

\end{document}